\newenvironment{proofof}[1]%
      {\par\noindent{\it Proof of #1.}} %
      {\qed \medskip}
\newcommand{\Exp}{\mathbb{E}}
\newcommand{\e}{\mathrm{e}}
\newcommand{\phimax}{\phi_{\max}}
\newcommand{\zetamax}{\zeta_{\max}}
\newcommand{\eps}{\varepsilon}
\newcommand{\ccrit}{c_\mathrm{crit}}
\newcommand{\acrit}{\alpha_\mathrm{crit}}
\newcommand{\tG}{\widetilde{G}}
\newcommand{\abs}[1]{\left| #1 \right|}
\newcommand{\about}[1]{\mathrm{\Theta}\!\left( #1 \right)}
\begin{document}

\title{Independent Sets in Random Graphs from the Weighted Second Moment Method}
\author{Varsha Dani\inst{1} 
\and 
Cristopher Moore\inst{1, 2}}
\institute{Computer Science Department, University of New Mexico 
\and Santa Fe Institute}
\maketitle

\begin{abstract}
We prove new lower bounds on the likely size of the maximum independent set in a random graph with a given constant average degree.  Our method is a weighted version of the second moment method, where we give each independent set a weight based on the total degree of its vertices.
\end{abstract}

\section{Introduction}

We are interested in the likely size of the largest independent set $S$ in a random graph with a given average degree. It is easy to see that $|S|=\mathrm{\Theta}(n)$ whenever the average degree is constant, and Shamir and Spencer~\cite{shamirSpencer87} showed using Azuma's inequality that, for any fixed $n$, $|S|$ is tightly concentrated around its mean.  Moreover, Bayati, Gamarnik, and Tetali~\cite{bayati2010} recently showed that $|S|/n$ converges to a limit with high probability.  Thus for each constant $c$ there is a constant $\acrit = \acrit(c)$ such that    
\[
\lim_{n \rightarrow \infty} \Pr[\mbox{$G(n,p=c/n)$ has an independent set of size $\alpha n$}] 
= \begin{cases} 1 & \alpha < \acrit \\
0 & \alpha > \acrit \, .
\end{cases}
\]
By standard arguments this holds in $G(n,m=cn/2)$ as well.  

Our goal is to bound $\acrit$ as a function of $c$, or equivalently to bound
\[
\ccrit = \sup \,\{ c : \acrit(c) \ge \alpha \} \, ,
\]
as a function of $\alpha$.  For $c \le \e$, a greedy algorithm of Karp and Sipser~\cite{karp-sipser} asymptotically finds a maximal independent set, and analyzing this algorithm with differential equations yields the exact value of $\acrit$.  For larger $c$, Frieze~\cite{frieze90} determined $\acrit$ to within $o(1/c)$, where $o$ refers to the limit where $c$ is large.  These bounds were improved by Coja-Oghlan and Efthymiou~\cite{coja2011} who prove detailed results on the structure of the set of independent sets.

We further improve these bounds.  Our method is a weighted version of the second moment method, inspired by the work of Achlioptas and Peres~\cite{ach-peres} on random $k$-SAT, where each independent set is given a weight depending on the total degree of its vertices.  In addition to improving bounds on this particular problem, our hope is that this advances the art and science of inventing random variables that counteract local sources of correlation in random structures.

We work in a modified version of the $G(n,m)$ model which we call $\tG(n,m)$.  For each of the $m$ edges, we choose two vertices $u,v$ uniformly and independently and connect them.  This may lead to a few multiple edges or self-loops. A vertex with a self-loop cannot belong to an independent set.  In the sparse case where $m=cn/2$ for constant $c$, with constant positive probability $\tG(n,m=cn/2)$ has no multiple edges or self-loops, in which case it is uniform in the usual model $G(n,m=cn/2)$ where edges are chosen without replacement from distinct pairs of vertices.  Thus any property which holds with high probability for $\tG(n,m)$ also holds with high probability for $G(n,m)$, and any bounds we prove on $\acrit$ in $\tG(n,m)$ also hold in $G(n,m)$.

We review the first moment upper bound on $\acrit$ from Bollob\'as~\cite{bollobas}.  Let $X$ denote the number of independent sets of size $\alpha n$ in $\tG(n,m)$.  Then
\[
\Pr[X > 0] \le \Exp[X] \, .
\]
By linearity of expectation, $\Exp[X]$ is the sum over all $\binom{n}{\alpha n}$ sets of $\alpha n$ vertices of the probability that a given one is independent.  The $m$ edges $(u,v)$ are chosen independently and for each one $u,v \in S$ with probability $\alpha^2$, so
\[
\Exp[X] = \binom{n}{\alpha n} (1-\alpha^2)^m \, . 
\]
In the limit $n \to \infty$, Stirling's approximation $n! = (1+o(1)) \sqrt{2\pi n} \,n^n \,\e^{-n}$ gives
\begin{equation}
\label{eq:stirling}
\binom{n}{\alpha n} = \about{\frac{1}{\sqrt{n}} \,\e^{n h(\alpha)}} \, ,
\end{equation}
where $h$ is the entropy function
\[
h(\alpha) = -\alpha \ln \alpha - (1-\alpha) \ln (1-\alpha) \, ,
\]
and where $\mathrm{\Theta}$ hides constants that depend smoothly on $\alpha$.  Thus
\[
\Exp[X] = \about{ \frac{1}{\sqrt{n}} \,\e^{n(h(\alpha)+ (c/2) \ln (1-\alpha^2))} } \, .
\]
For each $c$, the $\alpha$ such that 
\begin{equation}
\label{eq:first-moment}
h(\alpha)+ (c/2) \ln (1-\alpha^2) = 0 
\end{equation}
is an upper bound on $\acrit(c)$, since for larger $\alpha$ the expectation $\Exp[X]$ is exponentially small.

We find it more convenient to parametrize our bounds in terms of the function $\ccrit(\alpha)$.  Then~\eqref{eq:first-moment} gives the following upper bound,
\begin{equation}
\label{eq:c-upper}
\ccrit(\alpha) 
\le 2 \,\frac{\alpha \ln \alpha + (1-\alpha) \ln (1-\alpha)}{\ln (1-\alpha^2)} 
\le 2 \,\frac{\ln (1/\alpha) + 1}{\alpha} \, .
\end{equation}
We will prove the following nearly-matching lower bound.  
\begin{theorem}
\label{thm:lower}
For any constant $x > 4/\e$, for sufficiently small $\alpha$
\begin{equation}
\label{eq:c-lower}
\ccrit(\alpha)
\ge 2 \,\frac{\ln (1/\alpha) + 1}{\alpha} - \frac{x}{\sqrt{\alpha}} \, .
\end{equation}
\end{theorem}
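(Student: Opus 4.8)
I would run a weighted second moment argument in the spirit of Achlioptas--Peres. Fix $\alpha$ small, set $c=2(\ln(1/\alpha)+1)/\alpha-x/\sqrt\alpha$, and for a parameter $\eta>0$ (to be chosen) weight each independent set $S$ with $|S|=\alpha n$ by $\eta^{D(S)}$, where $D(S)=\sum_{v\in S}\deg v$; put $X=\sum_S\mathbf{1}[S\text{ independent}]\,\eta^{D(S)}$. Because $\eta^{D(S)}=\prod_e\eta^{|e\cap S|}$ and the $m=cn/2$ edges of $\tG(n,m)$ are independent --- and a given edge is internal to $S$ with probability $\alpha^2$ (which kills the indicator), meets $S$ in one vertex with probability $2\alpha(1-\alpha)$ (contributing a factor $\eta$), and misses $S$ with probability $(1-\alpha)^2$ --- we get $\Exp[X]=\binom n{\alpha n}\bigl((1-\alpha)^2+2\alpha(1-\alpha)\eta\bigr)^m$. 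For the second moment I would sum over the overlap $k=|S\cap T|$ and classify each vertex by its membership in the four cells cut out by $S$ and $T$: an edge survives both indicators exactly when it misses $S$ and misses $T$, and the exponent of $\eta$ it contributes is $|e\cap S|+|e\cap T|$ (so an endpoint in $S\cap T$ counts twice). This yields $\Exp[X^2]=\sum_k\binom n{\alpha n}\binom{\alpha n}k\binom{(1-\alpha)n}{\alpha n-k}P(k/n)^m$, where with $q=1-2\alpha+\beta$
\[
P(\beta)=q^2+4(\alpha-\beta)q\,\eta+2\beta q\,\eta^2+2(\alpha-\beta)^2\eta^2 .
\]

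Let $g(\beta)$ be the exponential growth rate of the $k$th summand ($\beta=k/n$) and $f$ that of $\Exp[X]$. One has $P(\alpha^2)=\bigl((1-\alpha)^2+2\alpha(1-\alpha)\eta\bigr)^2$ for every $\eta$, so $g(\alpha^2)=2f$ always; since $\alpha^2$ is interior to the range of $\beta$, the method can succeed only if $\alpha^2$ is a critical point of $g$, and since the entropy part of $g'$ vanishes at $\alpha^2$ this requires $P'(\alpha^2)=2\bigl((1-\alpha)-(1-2\alpha)\eta\bigr)^2=0$, forcing $\eta=\eta^\ast:=(1-\alpha)/(1-2\alpha)$, which I fix. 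Then $P''\equiv 2/(1-2\alpha)^2$ and $g''(\alpha^2)=-\bigl(\alpha(1-\alpha)\bigr)^{-2}+c(1-\alpha)^{-4}<0$ since $c=O(\alpha^{-1}\ln(1/\alpha))=o(\alpha^{-2})$, so $\alpha^2$ is a nondegenerate local maximum of $g$.

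The heart of the proof is to show $\alpha^2$ is the \emph{global} maximum of $g$ on $[0,\alpha]$ (take $\alpha<1/2$). On $[0,\alpha^2]$ this is routine: the entropy part of $g'$ is $\ln\frac{(\alpha-\beta)^2}{\beta(1-2\alpha+\beta)}$, positive there and unbounded as $\beta\to0$, while $\tfrac c2 P'/P=O(c\alpha^2)=o(1)$ (and near $\alpha^2$ the entropy term wins since $g''(\alpha^2)<0$), so $g$ increases on $[0,\alpha^2]$. On $[\alpha^2,\alpha]$, writing $\beta=\gamma\alpha$ and using $\ln\frac{P(\gamma\alpha)}{P(\alpha^2)}=\ln\!\bigl(1+\alpha^2(\gamma-\alpha)^2/(1-\alpha)^4\bigr)$ together with Stirling, I would derive
\[
g(\gamma\alpha)-g(\alpha^2)=-\alpha\gamma(1-\gamma)\ln\tfrac1\alpha+\alpha\bigl(F(\gamma)+\gamma^2\bigr)-\tfrac x2\,\alpha^{3/2}\gamma^2+O\!\bigl(\alpha^2\ln\tfrac1\alpha\bigr),
\]
where $F(\gamma)=\gamma\ln\tfrac1\gamma+2(1-\gamma)\ln\tfrac1{1-\gamma}-\gamma$. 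For $\gamma$ away from $1$ the first term dominates and is negative; the only place this can fail is a local maximum of $g$ forming near $\gamma=1$, i.e. near $\beta=\alpha$. Putting $\gamma=1-\eps$ the expression collapses to $\alpha\eps\ln\frac\alpha{\eps^2}-\tfrac x2\alpha^{3/2}+o(\alpha^{3/2})$, and since $\max_{\eps>0}\alpha\eps\ln\frac\alpha{\eps^2}=2\alpha^{3/2}/\e$ (attained at $\eps=\sqrt\alpha/\e$), this local maximum sits at $g(\alpha^2)+\bigl(\tfrac2\e-\tfrac x2\bigr)\alpha^{3/2}+o(\alpha^{3/2})$, which lies strictly below $g(\alpha^2)$ precisely when $x>4/\e$ and $\alpha$ is small --- exactly the hypothesis. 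I expect this expansion to be the main obstacle: one must carry the binomial entropies and $\ln P$ to the order at which the $\ln(1/\alpha)$ contributions cancel, leaving an $\alpha^{3/2}$-scale remainder, and then check the truncation errors are genuinely $o(\alpha^{3/2})$ across the whole window near $\gamma=1$.

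Granting all this, a Laplace estimate (legitimate because $g''(\alpha^2)<0$ and the $\beta$ bounded away from $\alpha^2$ are exponentially suppressed) gives $\Exp[X^2]=O(\Exp[X]^2)$, hence by the second moment inequality $\Pr[X>0]\ge\Exp[X]^2/\Exp[X^2]=\Omega(1)$. As $X>0$ forces the independence number of $\tG(n,m)$ to be at least $\alpha n$, and this independence number changes by $O(1)$ under altering one edge and is therefore concentrated within $O(\sqrt n)$ of its mean by Azuma's inequality, with high probability $\tG(n,m)$ --- and hence $G(n,m)$ and $G(n,c/n)$ by the reductions in the introduction --- has an independent set of size $\alpha n-O(\sqrt n)\ge\alpha'n$ for every $\alpha'<\alpha$. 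Thus $\acrit(c)\ge\alpha$, i.e. $\ccrit(\alpha)\ge c=2(\ln(1/\alpha)+1)/\alpha-x/\sqrt\alpha$, which is~\eqref{eq:c-lower}.
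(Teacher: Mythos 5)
Your proposal is correct and takes essentially the same approach as the paper: the same weighted second moment with the same tuned weight (your $\eta^*=(1-\alpha)/(1-2\alpha)$ is the reciprocal of the paper's $\mu$), the same reduction to showing $\zeta=\alpha^2$ globally maximizes the exponent, and the same identification of the competing near-boundary maximum at $\zeta\approx\alpha(1-\sqrt\alpha/\e)$ whose negativity requires exactly $x>4/\e$. The only packaging difference is that the paper first replaces $\phi$ by the upper bound $\psi$ via $\ln(1+u)\le u$ and then uses the cubic structure of $\psi''$ to prove there are at most two local maxima, locating and bounding the second one through three short lemmas, whereas you argue monotonicity on $[0,\alpha^2]$ and expand directly on $[\alpha^2,\alpha]$ --- but the expansion, the critical scaling $\eps=\sqrt\alpha/\e$, and the resulting threshold are identical.
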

\noindent
Coja-Oghlan and Efthymiou~\cite{coja2011} bounded $\ccrit$ within a slightly larger factor $O(\sqrt{\ln (1/\alpha)/\alpha})$.
 
Inverting~\eqref{eq:c-upper} and Theorem~\ref{thm:lower} gives the following bounds on $\acrit(c)$.  The lower bound is a significant improvement over previous results:
\begin{corollary}
\label{cor:lower}
For $z > 0$, let $W(z)$ denote the unique positive root $x$ of the equation $x \e^x = z$. Then for any constant $y > 4 \sqrt{2} / \e$, 
\[
\frac{2}{c} \,W\!\left( \frac{\e c}{2} \right) - y \,\frac{\sqrt{\ln c}}{c^{3/2}} 
\le \acrit 
\le \frac{2}{c} \,W\!\left( \frac{\e c}{2} \right) \, ,
\]
where the lower bound holds for sufficiently large $c$.
\end{corollary}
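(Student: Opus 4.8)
The plan is to obtain the corollary by inverting the bounds on $\ccrit(\alpha)$, so I would start from the elementary fact that $c\mapsto\acrit(c)$ is non-increasing: in $\tG(n,m)$ this is immediate from the coupling in which, for $m_1<m_2$, the graph $\tG(n,m_1)$ is $\tG(n,m_2)$ with $m_2-m_1$ of its i.i.d.\ random edges deleted, so that every independent set of the denser graph is one of the sparser. Granting this, bounds on $\ccrit$ turn "sideways": since $\ccrit(\alpha)=\sup\{c:\acrit(c)\ge\alpha\}$, a continuous strictly decreasing upper bound $\ccrit(\alpha)\le A(\alpha)$ on the relevant (small-$\alpha$) range forces $\acrit(c)\le A^{-1}(c)$, and a continuous strictly decreasing lower bound $\ccrit(\alpha)\ge B(\alpha)$ forces $\acrit(c)\ge B^{-1}(c)$; each implication is a one-line monotonicity argument (e.g.\ if $\ccrit(\alpha)>c$ then some $c'>c$ has $\acrit(c')\ge\alpha$, whence $\acrit(c)\ge\acrit(c')\ge\alpha$).

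For the upper bound I would take $A(\alpha)=g(\alpha):=2\bigl(\ln(1/\alpha)+1\bigr)/\alpha$, which dominates $\ccrit(\alpha)$ by~\eqref{eq:c-upper}. Solving $c=g(\alpha_0)$ and substituting $y=c\alpha_0/2$ turns the equation into $y\e^y=\e c/2$, whose largest positive root is $W(\e c/2)$; hence $\alpha_0=g^{-1}(c)=\tfrac{2}{c}W(\e c/2)$ and $\acrit(c)\le\tfrac{2}{c}W(\e c/2)$.

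For the lower bound I would take $B(\alpha)=\ell_x(\alpha):=g(\alpha)-x/\sqrt{\alpha}$, the bound of Theorem~\ref{thm:lower} (valid for small $\alpha$), and estimate how far below $\alpha_0$ its inverse lies. Since $\acrit(c)\ge\ell_x^{-1}(c)$, it suffices to exhibit $\delta>0$ with $\ell_x(\alpha_0-\delta)\ge c\,(=g(\alpha_0))$, for then $\ell_x^{-1}(c)\ge\alpha_0-\delta$. This asks for $g(\alpha_0-\delta)-g(\alpha_0)\ge x/\sqrt{\alpha_0-\delta}$; as $|g'(\alpha)|=(4-2\ln\alpha)/\alpha^2$ is itself decreasing for small $\alpha$, the left side is at least $\delta\,|g'(\alpha_0)|$, so it is enough that $\delta\,|g'(\alpha_0)|\ge x/\sqrt{\alpha_0-\delta}$. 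From $c=g(\alpha_0)$ one reads off $c\alpha_0=2-2\ln\alpha_0\to\infty$ and $|g'(\alpha_0)|=(c\alpha_0+2)/\alpha_0^2=(1+o(1))\,c/\alpha_0$, so $\delta=(1+o(1))\,x\sqrt{\alpha_0}/c$ works; one then checks $\delta=o(\alpha_0)$ (because $c\sqrt{\alpha_0}\to\infty$), which retroactively justifies replacing $\sqrt{\alpha_0-\delta}$ by $\sqrt{\alpha_0}$ and linearizing $g$ at $\alpha_0$. Finally $W(\e c/2)=(1+o(1))\ln c$ gives $\alpha_0=(1+o(1))\,2\ln c/c$, hence $\sqrt{\alpha_0}=(1+o(1))\,\sqrt{2}\,\sqrt{\ln c}/\sqrt{c}$ and $\delta=(1+o(1))\,x\sqrt{2}\,\sqrt{\ln c}/c^{3/2}$, so $\acrit(c)\ge\alpha_0-\delta=\tfrac{2}{c}W(\e c/2)-(1+o(1))\,x\sqrt{2}\,\sqrt{\ln c}/c^{3/2}$. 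Given any $y>4\sqrt{2}/\e$, choosing $x$ with $4/\e<x<y/\sqrt{2}$ (permitted, since Theorem~\ref{thm:lower} allows every $x>4/\e$) makes $(1+o(1))\,x\sqrt{2}\le y$ for all large $c$, which is exactly the asserted lower bound.

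The substance is entirely in Theorem~\ref{thm:lower}; the corollary is a calculus exercise, so the only real care is bookkeeping: confirming $\delta=o(\alpha_0)$ so the mean-value step is sharp to leading order, pushing the $W$-asymptotics far enough that all discarded terms are genuinely $o(\sqrt{\ln c}/c^{3/2})$, and keeping the direction of every inequality straight when inverting the non-increasing map $c\mapsto\acrit(c)$. The constant comes out as $\sqrt{2}$ times the constant $4/\e$ of Theorem~\ref{thm:lower}, which is where the threshold $4\sqrt{2}/\e$ comes from.
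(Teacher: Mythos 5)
Your argument is correct and follows essentially the same route as the paper: both identify $\alpha_0=\tfrac{2}{c}W(\e c/2)$ as the root of $c=2(\ln(1/\alpha)+1)/\alpha$, then linearize near $\alpha_0$ to see that subtracting $x/\sqrt{\alpha}$ shifts the root down by $(1+o(1))\,x\sqrt{\alpha_0}/c=(1+o(1))\,x\sqrt{2}\,\sqrt{\ln c}/c^{3/2}$, with the $\sqrt{2}$ coming from $\alpha_0\sim 2\ln c/c$. The only cosmetic difference is that you justify the one-sided linearization via monotonicity of $|g'|$ while the paper invokes convexity of $\alpha_0(c)$, and you spell out the (correct, and worth noting) monotonicity of $\acrit$ needed to invert the $\ccrit$ bounds.
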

\noindent
If we like we can expand $W(\e c/2)$ asymptotically in $c$, 
\begin{align*}
W\!\left( \frac{\e c}{2} \right) 
&= \ln c - \ln \ln c + 1 - \ln 2 + \frac{\ln \ln c}{\ln c} - \frac{1-\ln 2}{\ln c} \\
&+ \frac{1}{2} \frac{(\ln \ln c)^2}{(\ln c)^2} - (2-\ln 2) \frac{\ln \ln c}{(\ln c)^2} 
+ \frac{3 + (\ln 2)^2 - 4 \ln 2}{2 (\ln c)^2} + O\!\left( \frac{(\ln \ln c)^3}{(\ln c)^3} \right) 
\, . 
\end{align*}
The first few of these terms correspond to the bound in~\cite{frieze90}, and we can extract as many additional terms as we wish.


\section{The weighted second moment method}

Our proof uses the second moment method.  For any nonnegative random variable $X$, the Cauchy-Schwarz inequality implies that 
\begin{equation}
\label{eq:prob-lower-bound}
\Pr[X > 0] \ge \frac{\Exp[X]^2}{\Exp[X^2]} \, .
\end{equation}
If $X$ counts the number of objects of a certain kind, \eqref{eq:prob-lower-bound} shows that at least one such object exists as long as the expected number of objects is large and the variance is not too large.

Unfortunately, applying this directly to the number $X$ of independent sets fails utterly.  The problem is that for most pairs of sets of size $\alpha n$, the events that they are independent are highly correlated, unlike the case where the average degree grows sufficiently quickly with $n$~\cite{bollobas,frieze90}.  As a result, $\Exp[X^2]$ is exponentially larger than $\Exp[X]^2$, and the second moment method yields an exponentially small lower bound on $\Pr[X > 0]$.

One way to deal with these correlations, used by Frieze in~\cite{frieze90}, is to partition the vertices into sets $V_i$ of $\lfloor 1/\alpha \rfloor$ vertices each and focus on those independent sets that intersect each $V_i$ exactly once.  In that case, a large-deviations inequality allows us to override the correlations.  

Here we pursue a different approach, inspired by the work of Achlioptas and Peres~\cite{ach-peres} on random $k$-SAT.  The idea is to give each independent set $S$ 
a weight $w(S)$, depending exponentially on local quantities in the graph.  Specifically, we define
\[
w(S) = \mu^{\mbox{\scriptsize \# of edges $(u,v)$ with $u,v \notin S$}} \, ,
\]
for some $\mu < 1$.  If the number of edges $m$ is fixed, the number of edges where neither endpoint is in $S$ is simply $m$ minus the total degree of the vertices in $S$.  Thus we can also write 
\begin{equation}
\label{eq:weight}
w(S) = \mu^{m-\sum_{v \in S} \deg(v)} \, .
\end{equation}

We will apply the second moment method to the total weight of all independent sets of size $\alpha n$,
\[
X = \sum_{\substack{S \subseteq V, |S| = \alpha n \\ S\, \textrm{independent}}} w(S) \, .
\]
If we tune $\mu$ properly, then for particular $\alpha^\star, c^\star$ we have $\Exp[X^2] = \about{\Exp[X]^2}$, in which case $\Pr[X > 0]$ is bounded above zero.  
In that case $\ccrit(\alpha^\star) \ge c^\star$, or equivalently $\acrit(c^\star) \ge \alpha^\star$.

Why is this the right type of weight?  Intuitively, one of the main sources of correlations between independent sets is the temptation to occupy low-degree vertices.  For instance, any two maximal independent sets contain all the degree-zero vertices, giving them a large overlap.  If $X$ simply counts the independent sets of size $\alpha n$, the resulting correlations make $X$'s variance exponentially large compared to the square of its expectation, and the second moment fails.

Weighting each $S$ as in~\eqref{eq:weight} counteracts this temptation, punishing sets that occupy low-degree vertices by reducing their weight exponentially.  As we will see below, when $\mu$ is tuned to a particular value, making this punishment condign, these correlations disappear in the sense that the dominant contribution to $\Exp[X^2]$ comes from pairs of sets $S, T$ of size $\alpha n$ such that $\abs{S \cap T} = \alpha^2 n + O(\sqrt{n})$, just as if $S$ and $T$ were chosen independently from among all sets of size $\alpha n$.  

This is analogous to the situation for $k$-SAT, where satisfying assignments are correlated because of the temptation to give each variable the truth value that agrees with the majority of its literals in the formula.  By giving each satisfying assignment a weight $\eta^{\mbox{\scriptsize \# of true literals}}$ and tuning $\eta$ properly, we make the dominant contribution to $\Exp[X^2]$ come from pairs of satisfying assignments which agree on $n/2+O(\sqrt{n})$ variables, just as if they were chosen independently~\cite{ach-peres}.

Proceeding, let us compute the first and second moments of our random variable $X$.  We extend the weight function $w(S)$ to all sets $S \subseteq V$ by setting $w(S)=0$ if $S$ is not independent.  That is, 
\[
X = \sum_{\substack{S \subseteq V \\ |S| = \alpha n}} w(S) 
\]
where
\[
w(S) = \prod_{(u,v) \in E} w_{u,v}(S) 
\]
and
\begin{equation}
\label{eq:wuv}
w_{u,v}(S)  = \begin{cases} 
\mu & \mbox{if $u, v \notin S$} \\ 
1 & \mbox{if $u \in S, v \notin S$ or vice versa} \\
0 & \mbox{if $u,v \in S$} \, .
\end{cases}
\end{equation}

We start by computing $\Exp[X]$.  Fix a set $S$ of size $\alpha n$.  Since the $m$ edges are chosen independently, 
\[
\Exp[w(S)] = w_1(\alpha,\mu)^m 
\quad
\text{where}
\quad
w_1(\alpha,\mu) = \Exp_{u,v}[w_{u,v}(S)] \, .
\]
For each edge $(u,v)$ in $\tG(n,m)$, $u$ and $v$ are chosen randomly and independently, so the probabilities of the three cases in~\eqref{eq:wuv} are $(1-\alpha)^2$, $2\alpha(1-\alpha)$, and $\alpha^2$ respectively.  Thus 
\[
w_1(\alpha, \mu)
= (1-\alpha)^2 \mu + 2\alpha(1-\alpha) \, . 
\]
By linearity of expectation,
\[
\Exp[X] = \sum_{\substack{S \subseteq V \\ |S| = \alpha n}} \Exp[w(S)] = \binom{n}{\alpha n} \,w_1(\alpha, \mu)^m \,.
\]
Using Stirling's approximation~\eqref{eq:stirling} 
and substituting $m = cn/2$ gives
\begin{equation}
\label{eq:x-first}
\Exp[X] = \about{ \frac{1}{\sqrt{n}} \,\e^{n f_1(\alpha)} }
\quad \text{where} \quad
f_1(\alpha) = h(\alpha)+ \frac{c}{2} \,\ln w_1(\alpha, \mu) \, .
\end{equation}
As before, $\mathrm{\Theta}$ hides constant factors that depend smoothly on $\alpha$.

Next we compute the second moment.  We have 
\[
\Exp[X^2] = \Exp\left[ \sum_{S} w(S) \sum_{T}w(T)\right] = \sum_{S,T} \Exp[w(S) \,w(T)]
\]
where $S$ and $T$ are subsets of $V$ of size $\alpha n$. The expectation of $w(S) \,w(T)$ does not depend on the specific choice of $S$ and $T$, but it does depend on the size of their intersection.  We say that $S$ and $T$ have \emph{overlap} $\zeta$ if $\abs{S \cap T} = \zeta n$.  Again using the independence of the edges, we have
\[
\Exp[w(S) \,w(T)] = w_2(\alpha, \zeta, \mu)^m 
\quad
\text{where}
\quad
w_2(\alpha, \zeta, \mu) = \Exp_{u,v}\left[ w_{u,v}(S) \,w_{u,v}(T) \right] \, . 
\]

For each edge $(u,v)$ of $\tG$, the probability that it has no endpoints in $S$ or $T$ is $(1-2\alpha +\zeta)^2$, in which case it contributes $\mu^2$ to $w_{u,v}(S) \,w_{u,v}(T)$.  The probability that it has one endpoint in $S$ and none in $T$ or vice versa is $2(2\alpha - 2\zeta)(1- 2\alpha +\zeta)$, in which case it contributes $\mu$. Finally, the probability that it has one endpoint in $S$ and one in $T$ is $2(\alpha - \zeta)^2 + 2\zeta(1- 2\alpha +\zeta)$, in which case it contributes $1$. With the remaining probability it has both endpoints in $S$ or $T$, in causing them to be non-independent and contributing zero.  Thus
\[
w_2(\alpha, \zeta, \mu) 
= (1-2\alpha +\zeta)^2 \mu^2 + 4(\alpha - \zeta)(1- 2\alpha +\zeta)\mu + 2(\alpha - \zeta)^2 + 2\zeta(1- 2\alpha +\zeta) 
\]
Observe that when $\zeta = \alpha^2$, as it typically would be if $S$ and $T$ were chosen independently and uniformly, we have 
\begin{equation}
\label{eq:w2-w1}
w_2 = w_1^2 \, . 
\end{equation}

The number of pairs of sets $S, T$ of size $\alpha n$ and intersection of size $z = \zeta n $ is the multinomial
\[
\binom{n}{\zeta n, (\alpha -\zeta) n, (\alpha -\zeta)n, (1-2\alpha +\zeta)n}
= \binom{n}{\alpha n} \binom{\alpha n}{\zeta n} \binom{(1-\alpha)n}{(\alpha-\zeta) n} \, ,
\]
and linearity of expectation gives
\[
\Exp[X^2] 
= \sum_{z=0}^{\alpha n} 
\binom{n}{z, \alpha n-z, \alpha n - z, (1-2\alpha)n + z} 
\,w_2(\alpha, \zeta, \mu)^m \, . 
\]
This sum is dominated by the terms where $\zeta = z/n$ is bounded inside the interval $(0,\alpha)$.  Stirling's approximation then gives
\[
\binom{n}{\zeta n, (\alpha -\zeta) n, (\alpha -\zeta)n , (1-2\alpha +\zeta)n } 
= \about{ \frac{\e^{n \left[ h(\alpha) + \alpha h\left(\zeta/\alpha\right) + (1-\alpha) h\left(\frac{\alpha -\zeta}{1-\alpha}\right) \right]}}{n^{3/2}} }
\, ,
\]
where $\mathrm{\Theta}$ hides constants that vary slowly with $\alpha$ and $\zeta$.  
Thus the contribution to $\Exp[X^2]$ of pairs of sets with overlap $\zeta \in (0,\alpha)$ is 
\begin{equation}
\label{eq:x-second}
\frac{1}{n^{3/2}}
\,\e^{n f_2(\alpha,\zeta,\mu)}
\end{equation}
where 
\[
f_2(\alpha,\zeta,\mu) 
= h(\alpha) + \alpha h\!\left(\frac{\zeta}{\alpha}\right) + (1-\alpha) h\!\left(\frac{\alpha -\zeta}{1-\alpha}\right) 
+ \frac{c}{2} \ln w_2(\alpha, \zeta, \mu) \, . 
\]
Combining~\eqref{eq:x-second} with~\eqref{eq:x-first}, we can write
\begin{equation}
\label{eq:moments-ratio}
\frac{\Exp[X^2]}{\Exp[X]^2} 
= \about{ \frac{1}{\sqrt{n}} \sum_{z=0}^{\alpha n} \e^{n \phi(z/n)} } \, ,
\end{equation}
where 
\begin{align*}
\phi(\zeta)
&= f_2(\alpha,\zeta,\mu) - 2 f_1(\alpha,\mu)  \\
&= \alpha h\!\left(\frac{\zeta}{\alpha}\right) + (1-\alpha) \,h\!\left(\frac{\alpha -\zeta}{1-\alpha}\right) - h(\alpha) 
+ \frac{c}{2} \ln \frac{w_2(\alpha, \zeta, \mu)}{w_1(\alpha, \mu)^2} \, .
\end{align*}
Using~\eqref{eq:w2-w1} and the fact that the entropy terms cancel, we have
\[
\phi(\alpha^2) = 0 \, .
\]
In other words, the contribution to $\Exp[X^2]$ from pairs of sets with overlap $\alpha^2$ is proportional to $\Exp[X]^2$.

We can now replace the sum in~\eqref{eq:moments-ratio} with an integral, 
\[
\frac{\Exp[X^2]}{\Exp[X]^2} 
= \about{ \frac{1}{\sqrt{n}} \sum_{z=0}^{\alpha n} \e^{n \phi(z/n)} }
= \about{ \sqrt{n} \int_0^\alpha \e^{n \phi(\zeta)} \mathrm{d}\zeta } \, , 
\]
and evaluate this integral using Laplace's method as in~\cite[Lemma 3]{ach-moore}.   Its asymptotic behavior depends on the maximum value of $\phi$, 
\[
\phimax = \max_{\zeta \in [0,\alpha]} \phi(\zeta) \, .  
\]
If $\phi'' < 0$ at the corresponding $\zetamax$, then it is dominated by an interval of width $\about{1/\sqrt{n}}$ around $\zetamax$ and 
\[
\frac{\Exp[X^2]}{\Exp[X]^2} 
= \about{ \e^{n \phimax} } \, . 
\]
If $\phimax = \phi(\alpha^2) = 0$, then $\Exp[X^2] = \about{ \Exp[X]^2 }$ and the second moment method succeeds.  Thus our goal is to show that $\phi$ is maximized at $\alpha^2$.

For this to happen, we at least need $\zeta = \alpha^2$ to be a \emph{local} maximum of $\phi$.  In particular, we need
\begin{equation}
\label{eq:local-max}
\phi'(\alpha^2) = 0 \, .
\end{equation}
Differentiating, we find that~\eqref{eq:local-max} holds if
\[
\mu = \frac{1-2\alpha}{1-\alpha} \, . 
\]
Henceforth, we will fix $\mu$ to this value.  In that case we have
\[
w_1 = 1-\alpha
\quad \text{and} \quad
w_2 = (1-\alpha)^2 + \frac{(\zeta - \alpha^2)^2}{(1-\alpha)^2} \, , 
\] 
so
\[
\phi(\zeta) 
= \alpha h\!\left(\frac{\zeta}{\alpha}\right) + (1-\alpha) \,h\!\left(\frac{\alpha -\zeta}{1-\alpha}\right) - h(\alpha) 
+ \frac{c}{2} \,\ln \!\left(1 +\frac{(\zeta -\alpha^2)^2}{(1-\alpha)^4} \right)
\] 
The remainder of this paper is dedicated to showing that for sufficiently small $\alpha$ as a function of $c$ or vice versa, $\phi$ is indeed maximized at $\alpha^2$.

\section{Finding and bounding the maxima}

Using $\ln (1+x) \le x$, we write $\phi(\zeta) \le \psi(\zeta)$ where
\begin{equation}
\label{eq:psi}
\psi(\zeta) 
= \alpha h\!\left( \frac{\zeta}{\alpha} \right) + (1-\alpha) \,h\!\left( \frac{\alpha-\zeta}{1-\alpha} \right) - h(\alpha)
+ \frac{c}{2} \frac{(\zeta-\alpha^2)^2}{(1-\alpha)^4} 
\, . 
\end{equation}
Note that 
\[
\psi(\alpha^2) = \phi(\alpha^2) = 0 \, .
\]
Our goal is to show for an appropriate $c$ that $\zeta = \alpha^2$ is in fact the global maximum of $\psi$, and therefore of $\phi$.  
In what follows, asymptotic symbols such as $O$ and $o$ refer to the limit $\alpha \to 0$, or equivalently the limit $c \to \infty$.  Error terms may be positive or negative unless otherwise stated.

The first two derivatives of $\psi(\zeta)$ are
\begin{gather}
\psi'(\zeta) = 
\frac{c \left(\zeta-\alpha^2\right)}{(1-\alpha)^4} 
+ 2 \ln (\alpha-\zeta) - \ln \zeta - \ln (1-2 \alpha+\zeta) 
\label{eq:psi1}
\\
\psi''(\zeta) = \frac{c}{(1-\alpha)^4} - \frac{2}{\alpha-\zeta} - \frac{1}{\zeta} - \frac{1}{1-2 \alpha+\zeta} 
\label{eq:psi2}
\end{gather}
The second derivative $\psi''(\zeta)$ tends to $-\infty$ at $\zeta=0$ and $\zeta=\alpha$.  Setting $\psi''(\zeta)=0$ yields a cubic equation in $\zeta$ which has one negative root and, for sufficiently small $\alpha$, two positive roots in the interval $[0,\alpha]$.  Thus for each $\alpha$ and sufficiently small $\alpha$, there are $0 < \zeta_1 < \zeta_2 < \alpha$ where 
\[
\psi''(\zeta) \begin{cases}
< 0 & 0 < \zeta < \zeta_1 \\
> 0 & \zeta_1 < \zeta < \zeta_2 \\
< 0 & \zeta_2 < \zeta < \alpha \, .
\end{cases}
\]

It follows that $\psi$ can have at most two local maxima.  One is in the interval $[0,\zeta_1]$, and the following lemma shows that for the relevant $\alpha$ and $c$ this is $\alpha^2$:

\begin{lemma}
\label{lem:local}
If $c=o(1/\alpha^2)$ then for sufficiently small $\alpha$, $\zeta_1 > \alpha^2$ and $\psi(\alpha^2)$ is a local maximum.
\end{lemma}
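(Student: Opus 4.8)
The plan is to pin down the location of $\zeta_1$ by a direct sign analysis of $\psi''$ on a small interval to the right of the origin. First I would record that $\zeta=\alpha^2$ is a critical point of $\psi$: substituting $\zeta=\alpha^2$ into~\eqref{eq:psi1}, the term $c(\zeta-\alpha^2)/(1-\alpha)^4$ vanishes, and since $\alpha-\alpha^2=\alpha(1-\alpha)$ and $1-2\alpha+\alpha^2=(1-\alpha)^2$ the logarithms telescope to $2\ln(\alpha(1-\alpha))-\ln\alpha^2-\ln(1-\alpha)^2=0$. So $\psi'(\alpha^2)=0$, consistent with the choice of $\mu$. The strategy is then to show $\psi''<0$ on all of $(0,\alpha^2]$; this both places $\zeta_1$ above $\alpha^2$ and (with $\psi'(\alpha^2)=0$) makes $\alpha^2$ a local maximum.

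For the bound on $\psi''$, I would start from~\eqref{eq:psi2}. For every $\zeta\in(0,\alpha^2]$ the terms $-2/(\alpha-\zeta)$ and $-1/(1-2\alpha+\zeta)$ are negative, so $\psi''(\zeta)\le c/(1-\alpha)^4-1/\zeta$, and since $1/\zeta\ge 1/\alpha^2$ on this interval,
\[
\psi''(\zeta)\;\le\;\frac{c}{(1-\alpha)^4}-\frac{1}{\alpha^2}\qquad\text{for all }\zeta\in(0,\alpha^2].
\]
The hypothesis $c=o(1/\alpha^2)$ is exactly what forces the right-hand side to be negative for small $\alpha$: it gives $c\alpha^2\to 0$ while $(1-\alpha)^4\to 1$, so $c\alpha^2<(1-\alpha)^4$ eventually, i.e.\ $c/(1-\alpha)^4<1/\alpha^2$. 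Hence $\psi''<0$ throughout $(0,\alpha^2]$.

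To finish, recall from the paragraph preceding the lemma that the zeros of $\psi''$ in $(0,\alpha)$ are $\zeta_1<\zeta_2$, with $\psi''<0$ on $(0,\zeta_1)$, $\psi''>0$ on $(\zeta_1,\zeta_2)$, and $\psi''<0$ on $(\zeta_2,\alpha)$. Since $\psi''$ has no zero in $(0,\alpha^2]$, we cannot have $\zeta_1\le\alpha^2$, so $\zeta_1>\alpha^2$. Then $\alpha^2$ lies in the open interval $(0,\zeta_1)$ on which $\psi''<0$, and together with $\psi'(\alpha^2)=0$ this shows $\psi$ has a strict local maximum at $\zeta=\alpha^2$.

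I do not expect a genuine obstacle here; the only thing to get right is the bookkeeping of which term in $\psi''$ dominates near the origin---the $-1/\zeta$ term, of order $1/\alpha^2$ at $\zeta=\alpha^2$---and to note that $c=o(1/\alpha^2)$ is precisely the threshold at which $c/(1-\alpha)^4$ can no longer overcome it, so one should not expect a weaker hypothesis. If a sharper statement is wanted one can evaluate $\psi''(\alpha^2)$ exactly, obtaining $c/(1-\alpha)^4-1/\alpha^2-2/(\alpha(1-\alpha))-1/(1-\alpha)^2=-(1+o(1))/\alpha^2$, but this refinement is not needed for the lemma as stated.
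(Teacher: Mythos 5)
Your proof is correct and follows essentially the same approach as the paper: bound $\psi''$ by dropping the $-2/(\alpha-\zeta)$ and $-1/(1-2\alpha+\zeta)$ terms and use $c=o(1/\alpha^2)$ to make the remainder negative. Your version is a touch more careful than the paper's---by showing $\psi''<0$ on all of $(0,\alpha^2]$ rather than only at $\zeta=\alpha^2$, you rule out the (a priori conceivable) possibility that $\alpha^2$ sits in $(\zeta_2,\alpha)$, and by explicitly verifying $\psi'(\alpha^2)=0$ you make the ``local maximum'' conclusion self-contained rather than relying on the earlier discussion of the choice of $\mu$.
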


\noindent 
The other local maximum is in the interval $[\zeta_2,\alpha]$, and we denote it $\zeta_3$.  To locate it, first we bound $\zeta_2$:

\begin{lemma}
\label{lem:zeta2}
If
\[
c = (2+o(1)) \,\frac{\ln (1/\alpha)}{\alpha} \, ,
\]
then 
\[
\frac{\zeta_2}{\alpha} = 1 - \delta_2 
\quad \text{where} \quad
\delta_2 = \frac{1+o(1)}{\ln (1/\alpha)} \, . 
\]
\end{lemma}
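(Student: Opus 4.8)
\noindent
The plan is to recast the equation $\psi''(\zeta)=0$ as a fixed-point equation and locate its larger root $\zeta_2$ by dominant balance. Write $C = c/(1-\alpha)^4$ and $g(\zeta) = \frac{2}{\alpha-\zeta}+\frac{1}{\zeta}+\frac{1}{1-2\alpha+\zeta}$, so that by~\eqref{eq:psi2} the interior critical points of $\psi$ are exactly the solutions of $g(\zeta)=C$ in $(0,\alpha)$, and $\psi''(\zeta)>0$ iff $g(\zeta)<C$. Each summand of $g$ is convex on $(0,\alpha)$ (for $\alpha<1/2$), so $g$ is convex there, blows up at both endpoints, and has a unique minimizer $\zeta^\star$; a one-line check that $g'(\alpha/2) = (\alpha/2)^{-2} + (1-3\alpha/2)^{-2} > 0$ shows $\zeta^\star<\alpha/2$, hence $\zeta_1<\zeta^\star<\alpha/2$. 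Consequently, for any test point $\zeta_0\in[\alpha/2,\alpha)$ we automatically have $\zeta_0>\zeta_1$, so the sign of $g(\zeta_0)-C$ decides which side of $\zeta_2$ it lies on: $g(\zeta_0)>C$ forces $\zeta_0>\zeta_2$, while $g(\zeta_0)<C$ forces $\zeta_0<\zeta_2$. This reduces the lemma to choosing the right pair of test points.

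Next I would substitute $\zeta=\alpha(1-\delta)$, so that $\alpha-\zeta=\alpha\delta$ and
\[
g(\alpha(1-\delta)) = \frac{2}{\alpha\delta} + \frac{1}{\alpha(1-\delta)} + \frac{1}{1-\alpha(1+\delta)} \, ,
\]
and note that when $\delta=\Theta(1/\ln(1/\alpha))$ the first term is of order $\ln(1/\alpha)/\alpha$ while the remaining two are $O(1/\alpha)$ and $O(1)$, smaller by a factor $\ln(1/\alpha)$. Plugging in the hypothesis $c=(2+o(1))\ln(1/\alpha)/\alpha$ and using $(1-\alpha)^{-4}=1+o(1)$ gives $C=\frac{2\ln(1/\alpha)}{\alpha}(1+o(1))$. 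Now fix $\eps\in(0,1)$ and set $\delta^\pm=(1\pm\eps)/\ln(1/\alpha)$, both of which tend to $0$ so that $\alpha(1-\delta^\pm)\in[\alpha/2,\alpha)$ for small $\alpha$. Then $g(\alpha(1-\delta^-))\ge \frac{2}{\alpha\delta^-}=\frac{2\ln(1/\alpha)}{\alpha(1-\eps)}>C$, since $\frac1{1-\eps}>1+\eps$ beats the $1+o(1)$ in $C$ once $\alpha$ is small; and $g(\alpha(1-\delta^+)) = \frac{2\ln(1/\alpha)}{\alpha(1+\eps)} + O(1/\alpha) = \frac{2\ln(1/\alpha)}{\alpha}\!\left(\frac1{1+\eps}+o(1)\right) < C$, since $\frac1{1+\eps}<1$ with room. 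By the first paragraph this pins $\zeta_2$ between $\alpha(1-\delta^+)$ and $\alpha(1-\delta^-)$, i.e. $\delta^- < 1-\zeta_2/\alpha < \delta^+$. As $\eps>0$ was arbitrary, $1-\zeta_2/\alpha=(1+o(1))/\ln(1/\alpha)$, which is the claim with $\delta_2:=1-\zeta_2/\alpha$.

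The heart of the argument is the dominant balance $\frac{2}{\alpha\delta_2}\approx C$; everything else is error-term bookkeeping, and the only point needing care is making sure the comparison test isolates $\zeta_2$ rather than $\zeta_1$, which is why I first place the minimizer of $g$ to the left of $\alpha/2$ and only ever test points in $[\alpha/2,\alpha)$. It is worth checking that the hypothesis on $c$ enters at exactly the right strength: it is precisely the regime $c=\Theta(\ln(1/\alpha)/\alpha)$ that makes the $2/(\alpha\delta)$ term of $g$ comparable to $C$ at the scale $\delta\asymp 1/\ln(1/\alpha)$, and the leading constant $2$ in the hypothesis on $c$ is what yields the leading constant $1$ in $\delta_2$. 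I do not anticipate a genuine obstacle; the only mild risk is mishandling the $(1-\alpha)^{-4}$ factor or the subleading terms of $g$, each of which is a clean $1+o(1)$ or $o(1)$ relative correction.
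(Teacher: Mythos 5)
Your argument is correct and is essentially the paper's own: both rest on the dominant balance $\frac{2}{\alpha-\zeta}\approx\frac{c}{(1-\alpha)^4}$, tested at $\zeta=\alpha(1-\delta)$ with $\delta$ on either side of $1/\ln(1/\alpha)$, and then read off the sign of $\psi''$ (equivalently $g-C$) to bracket $\zeta_2$. The one place you go beyond the paper's terse proof is the explicit check that your test points land on the $\zeta_2$ branch rather than the $\zeta_1$ branch, which you settle by using the convexity of $g$ and placing its minimizer left of $\alpha/2$; this is a legitimate and slightly more rigorous bookkeeping step, though not a different method. One small slip: you wrote $g'(\alpha/2) = (\alpha/2)^{-2} + (1-3\alpha/2)^{-2}$, but differentiating $g$ gives $g'(\zeta) = 2(\alpha-\zeta)^{-2} - \zeta^{-2} - (1-2\alpha+\zeta)^{-2}$, so in fact $g'(\alpha/2) = (\alpha/2)^{-2} - (1-3\alpha/2)^{-2}$; the sign conclusion $g'(\alpha/2)>0$ is unaffected for small $\alpha$, but the formula as written is wrong.
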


\noindent
Thus $\zeta_2/\alpha$, and therefore $\zeta_3/\alpha$, tends toward $1$ as $\alpha \to 0$.  

We can now locate $\zeta_3$ when $\alpha$ is close to its critical value.
\begin{lemma}
\label{lem:zeta3}
If
\[
c = \frac{1}{\alpha} \big( 2 \ln (1/\alpha) + 2-o(1) \big) \, ,
\]
then 
\[
\frac{\zeta_3}{\alpha} = 1 - \delta_3 
\quad \text{where} \quad
\delta_3 =  \frac{1+o(1)}{\e} \sqrt{\alpha} \, . 
\]
\end{lemma}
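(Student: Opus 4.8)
The idea is to locate $\zeta_3$ as the unique zero of $\psi'$ in $(\zeta_2,\alpha)$ by a change of variables together with an intermediate‑value argument. Since $\zeta_3/\alpha$ is close to $1$, set $\zeta=\alpha(1-\delta)$, so that $\alpha-\zeta=\alpha\delta$, $\zeta-\alpha^2=\alpha(1-\delta-\alpha)$, and $1-2\alpha+\zeta=1-\alpha(1+\delta)$. Substituting into~\eqref{eq:psi1} and combining the logarithms of $\alpha$ gives
\[
g(\delta):=\psi'\big(\alpha(1-\delta)\big)
=\frac{c\alpha(1-\delta-\alpha)}{(1-\alpha)^4}+\ln\alpha+2\ln\delta-\ln(1-\delta)-\ln\big(1-\alpha(1+\delta)\big)\,.
\]
Under the hypothesis $c\alpha=\Theta(\ln(1/\alpha))$, so for $\delta=O(\sqrt\alpha)$ the last two logarithms are $O(\delta)+O(\alpha)=o(1)$, and $(1-\alpha)^{-4}=1+O(\alpha)$ gives $c\alpha(1-\delta-\alpha)/(1-\alpha)^4=c\alpha(1-\delta)+O(c\alpha^2)=c\alpha(1-\delta)+o(1)$. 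Hence, uniformly for $\delta=O(\sqrt\alpha)$,
\[
g(\delta)=c\alpha(1-\delta)+\ln\alpha+2\ln\delta+o(1)\,.
\]

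Next I would insert the precise hypothesis $c\alpha=2\ln(1/\alpha)+2-o(1)$ and evaluate $g$ at $\delta_\pm=\frac{1\pm\eps}{\e}\sqrt\alpha$ for an arbitrary fixed $\eps\in(0,1)$. Since $c\alpha\,\delta_\pm=O(\sqrt\alpha\,\ln(1/\alpha))=o(1)$ we have $c\alpha(1-\delta_\pm)=c\alpha+o(1)$, and $2\ln\delta_\pm=\ln\alpha-2+2\ln(1\pm\eps)$; adding these together with the explicit $\ln\alpha$ term, all the $\ln\alpha$‑contributions and the constants cancel, leaving
\[
g(\delta_-)=2\ln(1-\eps)+o(1)<0
\qquad\text{and}\qquad
g(\delta_+)=2\ln(1+\eps)+o(1)>0
\]
for all sufficiently small $\alpha$. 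As $g$ is continuous on $[\delta_-,\delta_+]$, it has a zero $\delta^\star\in(\delta_-,\delta_+)$, and since $\eps$ is arbitrary this forces $\delta^\star=\frac{1+o(1)}{\e}\sqrt\alpha$; then $\zeta^\star:=\alpha(1-\delta^\star)$ satisfies $\psi'(\zeta^\star)=0$.

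It remains to identify $\zeta^\star$ with $\zeta_3$. Our hypothesis implies that of Lemma~\ref{lem:zeta2}, so $\zeta_2=\alpha(1-\delta_2)$ with $\delta_2=(1+o(1))/\ln(1/\alpha)$; since $\sqrt\alpha=o(1/\ln(1/\alpha))$, both $\delta_-$ and $\delta_+$ lie below $\delta_2$, i.e.\ $\zeta^\star\in(\zeta_2,\alpha)$. On $(\zeta_2,\alpha)$ we have $\psi''<0$ by the sign analysis of~\eqref{eq:psi2}, equivalently $g$ is strictly increasing on $(0,\delta_2)$; hence $\delta^\star$ is its only zero there, $\zeta^\star$ is the unique critical point of $\psi$ in $(\zeta_2,\alpha)$, and (as $\psi'$ passes from positive to negative at $\zeta^\star$) it is the local maximum $\zeta_3$. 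Thus $\zeta_3=\zeta^\star$ and $\delta_3=\frac{1+o(1)}{\e}\sqrt\alpha$. As a sanity check, $\psi'(\zeta_2)=g(\delta_2)=\ln(1/\alpha)-2\ln\ln(1/\alpha)+o(1)>0$, so the interior critical point genuinely exists. The only step needing real care is verifying that the error term in the formula for $g(\delta)$ is $o(1)$ \emph{uniformly} over $\delta=O(\sqrt\alpha)$ (indeed over $\delta=O(1/\ln(1/\alpha))$, which is what the last check uses); once that uniformity is in hand, the lemma is just the intermediate value theorem plus the concavity of $\psi$ beyond $\zeta_2$.
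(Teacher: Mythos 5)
Your proof is correct and follows essentially the same route as the paper: substitute $\zeta=\alpha(1-\delta)$ into \eqref{eq:psi1}, reduce to $\psi'=c\alpha(1-\delta)+\ln\alpha+2\ln\delta+o(1)$, insert $\delta=\text{const}\cdot\sqrt{\alpha}/\e$, and read off the sign change at the constant $1$. Where the paper tests $\delta=b\sqrt{\alpha}/\e$ for a fixed constant $b$ and concludes $\psi'\lessgtr 0$ according as $b\lessgtr 1$, you test at $\delta_\pm=(1\pm\eps)\sqrt{\alpha}/\e$ and invoke the intermediate value theorem plus the concavity of $\psi$ on $(\zeta_2,\alpha)$; your version is a bit more explicit about why the zero so found really is $\zeta_3$ (uniqueness via $\psi''<0$ there, and that $\zeta^\star>\zeta_2$ since $\sqrt{\alpha}=o(1/\ln(1/\alpha))$), but the underlying calculation and conclusion are identical.
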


\begin{lemma}
\label{lem:done}
For any constant $x > 4/\e$, if 
\[
c = \frac{2 \ln (1/\alpha) + 2 - x \sqrt{\alpha} }{\alpha} \, ,
\]
then $\psi(\zeta_3) < 0$ for sufficiently small $\alpha$.
\end{lemma}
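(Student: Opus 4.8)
The plan is to evaluate $\psi$ at $\zeta_3$ by substituting the location of $\zeta_3$ supplied by Lemma~\ref{lem:zeta3} into~\eqref{eq:psi} and expanding each term finely enough to read off the sign of $\psi(\zeta_3)$. The hypothesis $c = (2\ln(1/\alpha) + 2 - x\sqrt{\alpha})/\alpha$ has exactly the form required by Lemma~\ref{lem:zeta3}, so I may write $\zeta_3 = \alpha(1-\delta_3)$ with $\delta_3 = \frac{1+o(1)}{\e}\sqrt{\alpha}$. I would then handle the four pieces of $\psi(\zeta_3)$ separately: the two ``small-argument'' entropy terms $\alpha\, h(\delta_3)$ and $(1-\alpha)\, h\!\left(\frac{\alpha\delta_3}{1-\alpha}\right)$, the ``bulk'' term $-h(\alpha)$, and the quadratic term $\frac{c}{2}(\zeta_3-\alpha^2)^2/(1-\alpha)^4$.

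For the entropy terms I would use $h(t) = t\ln(1/t) + t + O(t^2)$ for small $t$, together with $\ln(1/\delta_3) = \tfrac12\ln(1/\alpha) + 1 + o(1)$, which follows from $\delta_3 \sim \sqrt{\alpha}/\e$; this yields $\alpha\, h(\delta_3) = \tfrac12\alpha\delta_3\ln(1/\alpha) + 2\alpha\delta_3 + o(\alpha^{3/2})$ and, using $\ln\!\big((1-\alpha)/(\alpha\delta_3)\big) = \tfrac32\ln(1/\alpha)+1+o(1)$, similarly $(1-\alpha)\, h\!\left(\frac{\alpha\delta_3}{1-\alpha}\right) = \tfrac32\alpha\delta_3\ln(1/\alpha) + 2\alpha\delta_3 + o(\alpha^{3/2})$. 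For the quadratic term I would write $\zeta_3 - \alpha^2 = \alpha(1-\alpha-\delta_3)$, expand $(1-\alpha-\delta_3)^2/(1-\alpha)^4 = 1 + 2\alpha - 2\delta_3 + O(\alpha\delta_3 + \delta_3^2)$, and use $\frac{c}{2}\alpha^2 = \alpha\ln(1/\alpha) + \alpha - \tfrac{x}{2}\alpha^{3/2}$ to get $\frac{c}{2}(\zeta_3-\alpha^2)^2/(1-\alpha)^4 = \alpha\ln(1/\alpha) + \alpha - 2\alpha\delta_3\ln(1/\alpha) - 2\alpha\delta_3 - \tfrac{x}{2}\alpha^{3/2} + o(\alpha^{3/2})$; finally $-h(\alpha) = -\alpha\ln(1/\alpha) - \alpha + o(\alpha^{3/2})$. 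Adding the four pieces, the $\alpha\ln(1/\alpha)$ and $\alpha$ terms cancel between $-h(\alpha)$ and the quadratic term, and the $\alpha^{3/2}\ln(1/\alpha)$ terms cancel with total coefficient $\tfrac12 + \tfrac32 - 2 = 0$, leaving
\[
\psi(\zeta_3) = \big(2 + 2 - 2\big)\alpha\delta_3 - \tfrac{x}{2}\alpha^{3/2} + o(\alpha^{3/2}) = \Big(\tfrac{2}{\e} - \tfrac{x}{2} + o(1)\Big)\alpha^{3/2},
\]
where I used $\alpha\delta_3 = \tfrac{1+o(1)}{\e}\alpha^{3/2}$. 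Since $x > 4/\e$ the constant $\tfrac{2}{\e}-\tfrac{x}{2}$ is negative, so $\psi(\zeta_3) < 0$ for all sufficiently small $\alpha$, as claimed.

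I expect the main difficulty to be bookkeeping rather than any genuinely new idea: the surviving term is of order $\alpha^{3/2}$, which is smaller than several intermediate contributions of order $\alpha^{3/2}\ln(1/\alpha)$ (and than the still larger $\alpha\ln(1/\alpha)$ and $\alpha$ terms), so each expansion must be carried well past its leading term and the cancellations checked exactly rather than up to a constant. The delicate step is the cancellation of the $\alpha^{3/2}\ln(1/\alpha)$ terms; it is forced by the structure of the three expressions that contribute it and needs only $\delta_3 = \Theta(\sqrt{\alpha})$, so the estimate $\delta_3 = \frac{1+o(1)}{\e}\sqrt{\alpha}$ from Lemma~\ref{lem:zeta3} is just precise enough to also determine the $\alpha^{3/2}$ coefficient. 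I would also verify that the discarded errors --- $O(\delta_3^2)$ in the entropies and $O(\alpha^2\ln(1/\alpha))$ and $O(\alpha\delta_3^2)$ in the quadratic term --- are genuinely $o(\alpha^{3/2})$, which is immediate from $\delta_3 = \Theta(\sqrt{\alpha})$.
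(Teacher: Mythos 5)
Your proof is correct and follows essentially the same route as the paper's: expand $\psi(\zeta_3)$ to order $\alpha^{3/2}$, observe that the $\alpha\ln(1/\alpha)$, $\alpha$, and $\alpha^{3/2}\ln(1/\alpha)$ contributions all cancel under the given choice of $c$, and identify the surviving $\alpha^{3/2}$-coefficient as $\tfrac{2}{\e}-\tfrac{x}{2}$, which is negative precisely when $x>4/\e$. The only cosmetic difference is that the paper writes $\delta = b\sqrt{\alpha}/\e$ with $b$ a free constant and then maximizes the resulting coefficient $\tfrac{2b(1-\ln b)}{\e}-\tfrac{x}{2}$ over $b$ (finding the worst case at $b=1$), whereas you substitute $\delta_3=(1+o(1))\sqrt{\alpha}/\e$ from Lemma~\ref{lem:zeta3} directly; both yield the same threshold.
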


\section{Proofs}

\begin{proofof}{Lemma~\ref{lem:local}}
Setting $\zeta = \alpha^2$ in~\eqref{eq:psi2} gives
\[
\psi''(\alpha^2) 
< \frac{c}{(1-\alpha)^4} - \frac{1}{\alpha^2} \, .
\]
If $c = o(1/\alpha^2)$ this is negative for sufficiently small $\alpha$, in which case $\zeta_1 > \alpha^2$ and $\psi(\alpha^2)$ is a local maximum.
\end{proofof}

\begin{proofof}{Lemma~\ref{lem:zeta2}}
For any constant $b$, if 
\begin{equation}
\label{eq:zeta2a}
\frac{\zeta}{\alpha} = 1 - \delta
\quad \text{where} \quad
\delta = \frac{b}{\ln (1/\alpha)} 
\end{equation}
then~\eqref{eq:psi2} gives
\[
\psi''(\zeta) 
= \left( 2-\frac{2}{b}+o(1)\right) \frac{\ln (1/\alpha)}{\alpha} - O(1/\alpha) \, .
\]
If $b \ne 1$, for sufficiently small $\alpha$ this is negative if $b < 1$ and positive if $b > 1$.  Therefore $\zeta_2 / \alpha = 1-\delta_3$ where $\delta_3 = (1+o(1))/\ln(1/\alpha)$.
\end{proofof}

\begin{proofof}{Lemma~\ref{lem:zeta3}}
Lemma~\ref{lem:zeta2} tells us that $\zeta_3 = \alpha(1-\delta)$ for some 
\[ 
\delta < \frac{1+o(1)}{\ln(1/\alpha)} \, . 
\]
Setting $\zeta = \alpha(1-\delta)$ in~\eqref{eq:psi1} and using 
\[
\frac{1}{(1-\alpha)^4} = 1 + O(\alpha) 
\quad \text{and} \quad 
-\ln (1-x) = O(x) 
\]
gives, after some algebra, 
\[
\psi'(\zeta)
= \alpha c + \ln \alpha + 2 \ln \delta + O(\alpha \delta c) + O(\alpha^2 c) \, .
\]
For any constant $b$, setting
\[
\delta = \frac{b \sqrt{\alpha}}{\e} 
\]
gives
\[
\psi'(\zeta)
= \alpha c + 2 \ln \alpha + 2 \ln b - 2 + O(\alpha^{3/2} c)  \, ,
\]
and setting
\[
c = \frac{2 \ln (1/\alpha) + 2-\eps}{\alpha} 
\]
then gives
\[
\psi'(\zeta) = 2 \ln b - \eps + o(1) \, .
\]
If $\eps = o(1)$ and $b \ne 1$, for sufficiently small $\alpha$ this is negative if $b < 1$ and positive if $b > 1$.  Therefore $\zeta_3 / \alpha = 1-\delta_3$ where $\delta_3 = (1+o(1)) \sqrt{a} / \e$.
\end{proofof}

\begin{proofof}{Lemma~\ref{lem:done}}
Setting $\zeta = \alpha(1-\delta)$ where $\delta = b \sqrt{a} / \e$ in~\eqref{eq:psi} and using the Taylor series 
\[
\frac{1}{(1-\alpha)^4} = 1 + 4 \alpha + O(\alpha^2) 
\quad \text{and} \quad
-\ln (1-x)=x+x^2/2+O(x^3)
\]
gives, after a fair amount of algebra,
\begin{align*}
\psi(\zeta) 
&= \alpha (\ln \alpha -1)
- \left(\frac{2 b \ln \alpha - 4 b + 2 b \ln b}{\e} \right) \alpha^{3/2} 
+ \left( \frac{c+1}{2} - \frac{b^2}{2 \e^2} \right) \alpha^2 \\
&- \frac{b}{\e} \,\alpha^{5/2} c 
+ \left( \frac{b^2}{2 \e^2}+1 \right) \alpha^3 c + O(\alpha^{7/2} c) + O(\alpha^{5/2}) \, .
\end{align*}
Setting
\[
c = \frac{2 \ln (1/\alpha) + 2 - x \sqrt{\alpha} }{\alpha} 
\]
for constant $x$ causes the terms proportional to $\alpha \ln \alpha$, $\alpha$, and $\alpha^{3/2} \ln \alpha$ to cancel, leaving
\[
\psi(\zeta) 
= \left( \frac{2b (1-\ln b)}{\e} - \frac{x}{2} \right) \alpha^{3/2} + O(\alpha^2) \, .
\]
The coefficient of $\alpha^{3/2}$ is maximized when $b=1$, and is negative whenever $x > 4/\e$.  In that case, $\psi(\zeta_3) < 0$ for sufficiently small $\alpha$, completing the proof.
\end{proofof}

\begin{proofof}{Corollary~\ref{cor:lower}}
First note that 
\begin{equation}
\label{eq:a0}
\alpha_0 = \frac{2}{c} \,W\!\left( \frac{\e c}{2} \right)
\end{equation}
is the root of the equation
\[
c = 2 \,\frac{\ln (1/\alpha_0) + 1}{\alpha_0} \, ,  
\]
since we can also write it as
\[
\e^{c \alpha/2} = \frac{\e}{\alpha_0} \, ,
\]
and multiplying both sides by $c \alpha_0/2$ gives
\[
\frac{c \alpha_0}{2} \,\e^{c \alpha_0/2} = \frac{\e c}{2} \, ,
\]
in which case~\eqref{eq:a0} follows from the definition of $W$.

The root $\alpha$ of 
\[
c = 2 \,\frac{\ln (1/\alpha) + 1}{\alpha} - \frac{x}{\sqrt{\alpha}} 
\]
is then at least 
\[
\frac{2}{c} \,W\!\left( \frac{\e c}{2} \right) + \big(x+o(1)\big) \frac{\partial \alpha_0}{\partial c} \sqrt{\frac{c}{2 \ln c}}
\]
since $\alpha = (1+o(1)) 2 \ln c / c$ and $\partial^2 \alpha_0 / \partial^2 c \ge 0$.  Since
\[
\frac{\partial \alpha_0}{\partial c} = - \big( 1+o(1) \big) \frac{2 \ln c}{c^2} \, , 
\]
the statement follows from Theorem~\ref{thm:lower}.
\end{proofof}

\section*{Acknowledgments}

We are grateful to Amin Coja-Oghlan, Alan Frieze, David Gamarnik, Yuval Peres, Alex Russell, and Joel Spencer for helpful conversations, and to the anonymous reviews for their comments.


\begin{thebibliography}{99}

\bibitem{ach-moore}
D.~Achlioptas and C.~Moore.
\newblock ``Random $k$-SAT: Two moments suffice to cross a sharp threshold.''
\newblock {\em SIAM J. Comput.\/}~{\bf 36} (2006), pp.~740--762.

\bibitem{ach-peres} D. Achlioptas and Y. Peres.
\newblock ``The Threshold for Random $k$-SAT is $2k \log 2 - O(k)$'' 
\newblock \emph{J. AMS.\/}~{\bf 17} (2004), pp.~947--973.

\bibitem{bayati2010}
M.~Bayati, D.~Gamarnik, and P.~Tetali.
\newblock ``Combinatorial approach to the interpolation method and scaling
  limits in sparse random graphs.''
\newblock {\em Proc. STOC} (2010), pp.~105--114.

\bibitem{bollobas}
B.~Bollob{\'a}s.
\newblock {\em Random graphs}. 2nd edition.
\newblock Cambridge studies in advanced mathematics. Cambridge University
  Press, 2001.

\bibitem{coja2011}
A.~Coja-Oghlan and C.~Efthymiou.
\newblock ``On independent sets in random graphs.''
\newblock {\em Proc. SODA}, (2011), pp.~136--144.

\bibitem{frieze90}
A.~M. Frieze.
\newblock ``On the independence number of random graphs.''
\newblock {\em Discrete Math.\/}~{\bf 81} (1990), pp.~171--175.

\bibitem{karp-sipser}
R.~M. Karp and M.~Sipser.
\newblock ``Maximum matching in sparse random graphs.''
\newblock {\em Proc. FOCS} (1981), pp.~364--375.

\bibitem{shamirSpencer87}
E.~Shamir and J.~Spencer.
\newblock ``Sharp concentration of the chromatic number on random graphs $G(n,p)$.''
\newblock {\em Combinatorica\/}~{\bf 7} (1987), pp.~121--129.

\end{thebibliography}










\end{document}